\def\ps@IEEEtitlepagestyle{
  \def\@oddfoot{\mycopyrightnotice}
  \def\@evenfoot{}
}
\def\mycopyrightnotice{
  {\footnotesize
  \begin{minipage}{\textwidth}
  \centering
 Copyright~\copyright~2017 IEEE. Personal use of this material is permitted. Permission from IEEE must be obtained for all other uses, in any current or future media, including reprinting/republishing this material for advertising or promotional purposes, creating new collective works, for resale or redistribution to servers or lists, or reuse of any copyrighted component of this work in other works.
  \end{minipage}
  }
}
\algnewcommand{\Inputs}[1]{%
  \State \textbf{Inputs:}
   \hspace*{\algorithmicindent}\parbox[t]{.8\linewidth}{\raggedright #1}
}
\algnewcommand{\Initialize}[1]{%
  \State \textbf{Initialization:}
  \Statex \hspace*{\algorithmicindent}\parbox[t]{.8\linewidth}{\raggedright #1}
}
\algnewcommand{\Output}[1]{%
  \State \textbf{Output:}
   \hspace*{\algorithmicindent}\parbox[t]{.8\linewidth}{\raggedright #1}
}
\newtheorem{theorem}{Theorem}
\newcommand{\Matrix}[1]{\mathbf{#1}}
\begin{document}
%
\title{Device-centric Energy Optimization for Edge Cloud Offloading}




%

\author{
\IEEEauthorblockN{Shreya Tayade\IEEEauthorrefmark{1},
Peter Rost\IEEEauthorrefmark{2},
Andreas Maeder\IEEEauthorrefmark{2} and 
Hans D. Schotten\IEEEauthorrefmark{1}}
\IEEEauthorblockA{\IEEEauthorrefmark{1}University of Kaiserslautern,
Institute for Wireless Communications and Navigation,
Kaiserslautern, Germany\\
Email: \{tayade, schotten\}@eit.uni-kl.de}
\IEEEauthorblockA{\IEEEauthorrefmark{2}Nokia Bell Labs,
Munich, Germany\\
Email: \{peter.m.rost, andreas.maeder\}@nokia-bell-labs.com}
}

\maketitle

\begin{abstract}
A wireless system is considered, where, computationally complex algorithms are offloaded from user devices to an edge cloud server, for the purpose of efficient battery usage. The main focus of this paper is to characterize and analyze, the trade-off between the energy consumed for processing the data locally, and for offloading. An analytical framework is presented, that minimizes the in-device energy consumption, by providing an optimal offloading decision for multiple user devices. A closed form solution is obtained for the offloading decision. The solution also provides the amount of computational data that should be offloaded, for the given computational and communication resources. Consequently, reduction in the energy consumption is observed.
\end{abstract}

%
\IEEEpeerreviewmaketitle
\section{Introduction}
In recent years, many new services and use cases with a focus on Internet of Things (IoT),  such as smart city, factory automation and so on, have emerged for wireless communications. Most of these use cases are realized by deploying computationally complex algorithms on user devices with limited computational resources and battery capacity. For example, a surveillance drone executes complex image processing algorithms for object detection and tracking. Executing them may readily discharge the battery of these devices due to high energy consumption.\\
An alternative solution is to offload these algorithms to a centralized server, which can be located in an edge cloud. This may reduce the device energy consumption, while simultaneously increasing the flexibility of deploying even more complex algorithms. Moreover, centralized processing is crucial in some cases such as factory automation, where robots need to collaborate, communicate, coordinate and synchronize for a given task.
The main challenge is to make the correct offloading decision, i.e., to assess the right criterion and threshold to offload an algorithm to the edge cloud. Even though the computational load on the device can be reduced by offloading, an additional communication load is introduced for transmitting the data to the edge cloud. Therefore, there exists a trade-off between communication load and computational load that user devices experience. To increase the energy efficiency of the user devices, it is necessary to take the offloading decision by analyzing this trade-off. The relevant parameters for this trade-off include communication and computational resources, algorithm's complexity, load condition on the cloud, device energy consumption, and delay constraints. 

\subsection{Related Work} 
Computation offloading is extensively studied recently \cite{Kumar, Kumar2013, mao2017mobile}. \cite{Kumar} provides a general overview addressing the circumstances under which offloading can save energy. The author has drawn some interesting conclusions, by analyzing the computational load and the available communication resources for a single user case. However, in practice, multiple users share the available resources, and hence, the analysis for multi-user scenario is necessary. Also, many energy minimizing techniques have been proposed in the literature for efficient computation offloading \cite{Zhang2013, XudongXiang2014, Cui2013,Zhang2016,You2017}. In \cite{Zhang2013}, an energy consumption is reduced by optimally scheduling data transmission over a wireless channel, and dynamically configuring the clock frequency of the local processor. Similarly, \cite{XudongXiang2014} presents an algorithm, based on stochastic dynamic programming, with an objective to energy efficiently schedule data transmission and link selection. A computational offloading problem was designed in \cite{Xchen}, based on the game theory approach, for multiple users considering a multi-channel interference environment. \cite{Zhang2016} provides an optimal computation offloading mechanisms in 5G heterogeneous environment. The approach is to effectively classify and prioritize the users, followed by optimally allocating the radio resources. \cite{You2017} also minimizes the energy consumption by optimal resource allocation for TDMA and OFDMA systems. The contributions in \cite{Sardellitti2015} and \cite{Sardellitti2014} deal with joint optimization of communication and computational resources for multiple users, so that the delay constraints are met. In contrast to optimally allocating resources, as in \cite{Zhang2016,You2017,Sardellitti2015,Sardellitti2014}, we evaluate the optimal offloading strategy for the allocated communication and computational resources. 

Apart from optimal resource allocation, for computational offloading, approaches like task partitioning and scheduling have been proposed in \cite{Kao,Wu2016}. In \cite{Kao}, the author presents an algorithm to partition a single task and optimally offload these partitioned task by analyzing their dependencies. A low complexity algorithm, that minimizes the device energy consumption by dynamically offloading a partitioned task, is designed with Lyapunov optimization in \cite{DongHuang2012}. The algorithms in \cite{Kao} and \cite{DongHuang2012} consider computational complexity to offload each partitioned task, but do not consider the effects of channel and availability of communication resources. The papers\cite{Kao}, \cite{DongHuang2012} and \cite{Kumar} lack the crucial analysis of the energy consumption for multi-user scenario, where the communication, and the edge cloud resources are shared by multiple users. 
\subsection{Contribution and outline of the paper}
This paper analyzes the trade-off between the energy consumption due to local processing, and offloading, in order to evaluate an optimal offloading decision. The optimal offloading decision is evaluated considering the effects of communication channel, load introduced at the edge cloud server by multiple users, computational complexity of the data processing algorithm, and availability of communication resources. We introduce a simple algorithm that not only provides the optimal offloading decision for multiple users, but also provides the optimum amount of computation data that should be offloaded. 
In Section~\ref{sec: system:model}, we describe the system model, including an energy consumption model for the user devices considering the algorithmic computational complexity and the communication complexity for offloading. The energy optimization problem and the closed form solution is presented in Section~\ref{sec: sum.energy}. Finally, the results and conclusion are discussed in Section~\ref{sec:results} and \ref{sec: conclusion} respectively.

\section{System Model} \label{sec: system:model}
Consider $N$ user devices uniformly distributed in a circular area of radius $R$. In the center of the area, the base station is placed and co-located with an edge-cloud server. The base station
has knowledge of the channel condition of each user $i\in[1; N]$.
The edge cloud has a processor with a maximum computational capacity of $C_s$, and each user device has a maximum computational capacity of $C_{u}$, where $C_{s} \gg C_{u}$. 

\subsection{Data model}
In each time period $T$, every user device needs to process $D_i$ data bits, which may either be processed by the device itself or offloaded to the edge cloud. The share of data per user device, that is offloaded in time period $T$, is given by $0\leq \alpha\leq 1$. As shown in  Fig.~\ref{fig:system_model}, the data $D_i$ is composed of $L$ data blocks, each composed of $M$ data elements with $S$ bits, i.\,e., $D_i = L\cdot M\cdot S$. This corresponds, for instance, to an industrial automation scenario where a field-bus gateway receives $M$ data elements from $L$ connected sensors during each time period in order to perform an update of the automation schedule.
The data processing algorithm has the complexity class, given by the function $f_i(M)$, that defines the amount of computational complexity introduced on the user device with respect to the increase in the number of data elements. 

\subsection{Device computational complexity and energy consumption} 
The computational complexity generated at a user device, if all the data is processed locally is given by 
\begin{align}
C_\text{u,i} &= L \cdot \eta_i f_{i}(M),
	\label{eq:system.model:device:10}
\end{align}
with the proportionality constant $\eta_i$ that depends on the processor specifications, and represents the amount of computation cycles required to execute the algorithm, when the number of data elements $M$ is $1$. Consequently, the energy consumed by the user device depends on the number of computation cycles required to process $M$ data elements. The number of computational cycles further depends on the number and the type (read/write, memory access) of the operations involved in the algorithm. As the detailed analysis of operation-specific energy consumption for a particular algorithm is out of the scope of this paper, we represent the total energy consumption in terms of the computation complexity, as given in \cite{HopfnerTowardsStrategies}. If the average amount of energy consumed by the user device for a single computation cycle is $\epsilon_i$, then the total energy consumed $E_\text{u,i}$ on the user device during time period $T$ is given by
\begin{align}
E_\text{u,i} & = \epsilon_{i} \cdot C_\text{u,i}  =
\epsilon_{i} \cdot L \cdot \eta_i f_{i}(M). \label{eq: optimization:1}
\end{align}

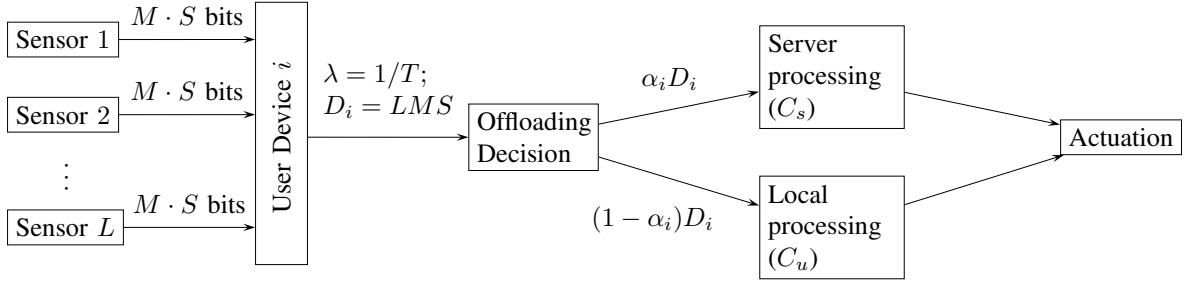
\begin{figure*}
  \centering
  \begingroup
\unitlength=1mm
\begin{picture}(155, 39)(0, 0)
\psset{xunit=1mm, yunit=1mm, linewidth=0.1mm}
\psset{arrowsize=2pt 3, arrowlength=1.6, arrowinset=.4}
\rput(0, -3){%
	\rput[l](0, 35){\rnode{S1}{\psframebox{Sensor $1$}}}%
	\rput[l](0, 25){\rnode{S2}{\psframebox{Sensor $2$}}}%
    \rput[l](8, 17){\rput{90}{$\cdots$}}%
	\rput[l](0, 10){\rnode{SL}{\psframebox{Sensor $L$}}}%
    \rput[l](40, 5){\pnode(0, 17){MUXout}%
    	\pnode(-7, 30){MuxIn1}%
    	\pnode(-7, 20){MuxIn2}%
    	\pnode(-7, 5){MuxInL}%
    	\rnode{MUX}{\rput{90}{\psframe(35, 7)\rput[c](17, 3.5){User Device $i$}}}}%
    \rput[c](70, 22){\rnode{Offload}{\psframebox{\parbox[c]{1.5cm}{Offloading Decision}}}}%
    \rput[l](100, 10){\rnode{LocalProcess}{\psframebox{\parbox[c]{1.7cm}{Local\\processing ($C_u$)}}}}%
    \rput[l](100, 30){\rnode{ServerProcess}{\psframebox{\parbox[c]{1.7cm}{Server\\processing ($C_s$)}}}}%
    \rput[l](140, 22){\rnode{Actuation}{\psframebox{Actuation}}}%
}
    \ncline{->}{MUXout}{Offload}\naput{$\begin{array}{l}\lambda = 1/T;\\D_i = LMS\end{array}$}
    \ncline{->}{Offload}{ServerProcess}\naput{$\alpha_iD_i$}
    \ncline{->}{Offload}{LocalProcess}\nbput{$(1-\alpha_i)D_i$}
    \ncline{->}{ServerProcess}{Actuation}
    \ncline{->}{LocalProcess}{Actuation}
    \ncline{->}{S1}{MuxIn1}\naput{$M\cdot S$ bits}
    \ncline{->}{S2}{MuxIn2}\naput{$M\cdot S$ bits}
    \ncline{->}{SL}{MuxInL}\naput{$M\cdot S$ bits}
\end{picture}
\endgroup
  \caption{Data processing model}
  \label{fig:system_model}
\end{figure*}
\subsection{Channel model}
The user devices transmit the data to the edge cloud using frequency division multiple access (FDMA), i.\,e., the carrier bandwidth $B$ is distributed equally among all user devices, such that each user device uses bandwidth $B_i$ distributed across $N_\text{RB, i}$ resource blocks (RBs). 
The effects of opportunistic scheduler are not considered in this paper for the sake of brevity. Each RB corresponds to a bandwidth of 180 kHz and time-slot duration $T_\text{slot}$ = 0.5 ms \cite{3GPPTS36.2112010}, i.\,e., $B_i = N_\text{RB,i} \cdot \unit[180]{kHz}$. 

The received signal-to-noise ratio (SNR) for user device $i$ is given by
\begin{equation}
\gamma_\text{ul,i} = \frac{P_\text{r,i}}{N_0 B_i},
\end{equation}
with the received signal power $P_\text{r, i}$ and the noise power spectral density $N_0$.
The $i^\text{th}$ user device is located at a distance $d_i$ from the cell center, hence, the received power is given by
\begin{equation}
  P_\text{r, i} = P_\text{tr, i} \cdot G \left[ \frac{d_0}{d_i}\right]^\beta,	
   \label{eq:system:model:2} 
\end{equation}
with the pathloss exponent $\beta$, transmit power $P_\text{tr, i}$, reference distance $d_0$, and $G = \left(\frac{\lambda}{4\pi d_0}\right)^2$ being an attenuation constant for free-space path-loss. We assume that $G$ is known at the base station.\\
Given the received SNR, the spectral efficiency is given by $r_i = \log_2(1 + \gamma_\text{ul, i}) \leq 6$ bps/Hz, which is the maximum spectral efficiency achievable in 3GPP LTE \cite{3GPP2009a}.\\
\subsection{Transmission energy model}
The user device has to offload $D_i$ bits to the edge cloud in the time interval $T$, i.\,e., the spectral efficiency in the time interval has to satisfy the equation 
\begin{equation}
D_i = T \cdot B_i \cdot \log_2\left(1 + \frac{P_\text{r,i}}{N_0 B_i}\right).
\end{equation}
Hence, the required receive signal power in order to transfer all $D_i$ bits to the edge cloud in the given time period $T$ is 
\begin{equation} 
P_\text{r, i}\stackrel{!}{=}  \left(2^{D_i/(B_i T)} - 1\right) N_0 B_i \label{eq:system:model:1}
\end{equation}
Using \eqref{eq:system:model:2}, the required transmit power is given by
\begin{equation} \label{eq:system:model:3}
P_\text{tr, i} \stackrel{!}{=} \frac{\left(2^{D_i/(B_i \cdot T)} - 1\right)}{G} \cdot \left [\frac{d_i}{d_0}\right]^{\beta} \cdot N_0 B_i,
\end{equation}
which is upper limited by $P_\text{tr, i}\leq P_\text{tr, max}$ \cite{3GPP2009a}.
Hence, the energy consumed by the $i^\text{th}$ user device to transmit its $D_i$ data bits is given by 
\begin{align} 
     E_\text{tr,i} & = P_\text{tr,i} \cdot T \\
 & = \frac{\left(2^{D_i/(B_i \cdot T)} - 1\right)}{G} \cdot \left [\frac{d_i}{d_0}\right]^{\beta} \cdot N_0 B_i \cdot T .\label{eq:system:model:4}
\end{align}
The energy consumed for transmitting the data to the edge cloud is largely impacted by the pathloss, allocated bandwidth, and the amount of data that is require to be offloaded.

\subsection{Energy consumption at the user device}
The previous model is now extended by taking into account the possibility of offloading only a share $\alpha_i D_i$, $0\leq\alpha_i\leq 1$, of the overall data. Accordingly, the models in 
~\eqref{eq: optimization:1} and ~\eqref{eq:system:model:4} are modified to be
\begin{equation}\label{eq:system:model:energy:2}
E_\text{u,i}(\alpha_i) =  (1- \alpha_i) L \times\epsilon_{i} \times \eta_i f_{i}(M)
\end{equation}
and  
\begin{equation} \label{eq:system:model:energy:1}
E_\text{tr,i}(\alpha_i)  =  \frac{\left(2^{\alpha_i D_i/(B_i T)} - 1\right)}{G} \cdot \left [\frac{d_i}{d_0}\right]^{\beta} \cdot N_0 B_i \cdot T
\end{equation}
respectively. The total energy consumption of the user device $i$ can be given as 
\begin{align}
E_\text{sum,i}(\alpha_i) = E_\text{tr,i}(\alpha_i) + E_\text{u,i}(\alpha_i) . \label{eq:system:model:energy:10}
\end{align}
The static energy consumption of the user device during idle time is fixed, and hence can be neglected in the model for making an offloading decision. 
\subsection{Edge cloud processing} 
Similar to the computational complexity introduced on the user device by the algorithm,  the computational complexity $C_\text{serv, i}$ is also introduced on the edge cloud, if the computation is offloaded.  However, the proportionality constant $\eta_{s}$ for the edge cloud is different, and depends upon its processor characteristics. 
The computational complexity on the edge-cloud is
\begin{equation}
C_\text{serv, i} = \eta_{s} \cdot f_{i}(M).
\end{equation}
Given the edge cloud processor's capacity $C_s$, the maximum number of computation cycles that the server can schedule in time period $T_\text{pr}$ is defined by $C_\text{s,max} = C_s \cdot T_\text{pr}$. We assume that $T_\text{pr}\ll T$ because one edge cloud server would need to process the data of the user devices from more than one cell.

\section{Sum Energy Optimization}   \label{sec: sum.energy}
\subsection{Problem formulation:}
As discussed in Section~\ref{sec: system:model}, we consider the energy consumed for in-device data processing, as well as for offloading the data to the edge cloud. The optimization problem is device-centric, and designed to minimize the total energy consumption $E_\text{sum,i}$, for all  $N$ user devices by offloading an optimal share of data, as given by the set of decision variables $\mathcal{A} = \{\alpha_1, \dots \alpha_N\}$. If $\alpha_i$ is $0$, no data is offloaded to the cloud, whereas if $\alpha_i$ is $1$, all the data is offloaded to the edge cloud.
The optimization problem is given as:
 \begin{align} \label{eq:opt.sum.energy.1}
 \mathcal{A}' = \text{arg}\min_{\forall \mathcal{A}\in\mathbb{R}^N} &\: \sum_{i=1}^N \: E_\text{sum,i}(\alpha_i) \nonumber  \\
 \text{s.t}  \quad  & \sum\limits_i^N L \cdot \alpha_i \cdot C_\text{serv, i} \leq C_\text{s,max} \nonumber \\
& 0 \leq \alpha_i \leq 1 ,
 \end{align}
The limiting constraint for offloading is that the total amount of required computational cycles to process the offloaded computation, should not exceed the maximum computational cycles $C_\text{s,max}$, that the server can provide in the given time period $T_\text{pr}$.
We further distinguish \emph{state-full} (SF) and \emph{state-less} (SL) offloading. In the case of SF offloading, every user device either offloads all the computation to the edge cloud or does not offload at all for a given period $T$. The value of  offloading parameter is $\alpha_i = \left\lbrace 0,1 \right\rbrace$. This corresponds to the case where the processing algorithm cannot be divided due to mutual data dependencies.
In the case of SL offloading, the user device is allowed to offload any partition of the data processing, i.\,e., $\alpha_i$ is therefore relaxed in the optimization problem and it lies between $[0;1]$. This corresponds to the case mentioned earlier, where $L$ sensors provide data to a gateway device, which processes these data independently.
\subsection{Solution to optimization problem}
This optimization problem is solved using Lagrange's Duality Theorem and by applying Karush-Kuhn-Tucker (KKT) conditions. The objective function is given as
\begin{eqnarray}
    	\mathcal{L}(\alpha_i, \nu,\psi) & = & \sum\limits_i^N\left(E_\text{u,i}(\alpha_i) + E_\text{tr,i}(\alpha_i)\right) \nonumber \\ 
        	 & & {+}\: \nu \left(\sum\limits_i^N L \cdot \alpha_i \cdot C_\text{serv,i} \leq C_\text{s,max}\right) \nonumber \\ 
            & & {-}\: \text{tr}\left[\Matrix{\Psi}\text{diag}(\alpha_i)\right]   
            \label{eq:opt.sum.energy.10}
\end{eqnarray}
where $\nu$ and $\psi$ are the Lagrange multipliers. The solution to this optimization problem is very similar to the water-filling algorithm and drives us towards two theorems stated below.

\begin{theorem}\label{theorem:optimization.problem:10}
The optimum offloading parameter $\alpha_i$ for the $i^\text{th}$ user device is given by
\begin{equation}
  \alpha_i = \left(\frac{1}{r_i}\log_2\left(\frac{1}{K_i}\left[E_\text{u,i} - \nu \: C_\text{serv,i} \right]\right)\right)^+
\end{equation}
with $r_i$= $D_i/(B_iT)$, a constant    $K_i=\left( \text{ln}(2) \left [\frac{d_i}{d_0}\right]^{\beta} N_0 D_i / G \right)$, and the Lagrangian parameter '$\nu$' defines the offloading threshold for the user device.
\end{theorem}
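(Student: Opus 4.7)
The plan is to exploit the convexity of the primal problem and apply the Karush--Kuhn--Tucker (KKT) stationarity and complementary-slackness conditions to the Lagrangian in \eqref{eq:opt.sum.energy.10}. I would first verify convexity: the in-device energy $E_\text{u,i}(\alpha_i) = (1-\alpha_i)L\epsilon_i\eta_i f_i(M)$ is affine in $\alpha_i$, the transmission energy $E_\text{tr,i}(\alpha_i)$ is a strictly convex exponential in $\alpha_i$, and both the server-capacity constraint and the box constraints are linear. Hence Slater's condition trivially holds (take all $\alpha_i = 0$) and the KKT conditions are necessary and sufficient for global optimality, so it suffices to solve them component-wise in $\alpha_i$.

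The core step is computing $\partial\mathcal{L}/\partial\alpha_i$ and isolating $\alpha_i$. Differentiating $E_\text{u,i}(\alpha_i)$ yields $-L\epsilon_i\eta_i f_i(M)$, which by \eqref{eq: optimization:1} is just $-E_\text{u,i}$ in the notation of the theorem. Applying the chain rule to $E_\text{tr,i}(\alpha_i)$ produces a factor $\ln(2)\,D_i/(B_i T)$ multiplying $2^{\alpha_i D_i/(B_i T)}$, which together with the remaining prefactors in \eqref{eq:system:model:energy:1} collapses to exactly $K_i\, 2^{\alpha_i r_i}$, with $r_i = D_i/(B_i T)$ and $K_i$ as defined in the statement. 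The capacity constraint contributes $\nu\,C_\text{serv,i}$ (up to the shared factor $L$ absorbed into $\nu$), and the lower-bound multiplier contributes $-\psi_i$. Setting the stationarity condition to zero, isolating $2^{\alpha_i r_i}$, taking $\log_2$, and dividing by $r_i$ then gives the interior expression
\begin{equation*}
\alpha_i = \frac{1}{r_i}\log_2\!\left(\frac{1}{K_i}\bigl[E_\text{u,i} - \nu\,C_\text{serv,i}\bigr]\right).
\end{equation*}

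Finally, I would invoke complementary slackness on the box constraint $\alpha_i \geq 0$. Whenever the interior formula is negative (equivalently, whenever $E_\text{u,i} - \nu C_\text{serv,i} < K_i$), the multiplier $\psi_i$ activates and forces $\alpha_i = 0$; this is exactly what the $(\cdot)^+$ operator encodes, yielding the stated closed form. The multiplier $\nu \geq 0$ is then determined implicitly by the edge-cloud capacity constraint, producing the water-filling flavour mentioned in the text. The main obstacle I expect is the careful bookkeeping of the KKT multipliers and constants: matching the $L$ inside the server-capacity term with the constant absorbed into $\nu$, verifying the sign of each contribution to $\partial\mathcal{L}/\partial\alpha_i$, and noting that the upper bound $\alpha_i \leq 1$ is assumed inactive in the stated form (otherwise a symmetric truncation at $1$ would have to be added by the same complementary-slackness argument).
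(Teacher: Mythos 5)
Your proposal follows essentially the same route as the paper's proof: differentiate the Lagrangian, identify $K_i 2^{\alpha_i r_i}$ as the derivative of the transmission energy with $r_i = D_i/(B_i T)$, and apply KKT stationarity plus complementary slackness on $\alpha_i \geq 0$ to obtain the closed form with the $(\cdot)^+$ operator (the paper merely organizes this as an explicit four-case analysis of the multipliers $\nu$ and $\psi_i$). Your additional convexity/Slater verification and your remarks on the factor $L$ being absorbed into $\nu$ and on the inactive upper bound $\alpha_i \leq 1$ are consistent with, and somewhat more careful than, the paper's own treatment.
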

\begin{proof}
See Appendix \ref{subsec:proof1}.
\end{proof}
The Lagrangian parameter $\nu$ is derived through an iterative method. For an overloaded system, where the cloud server capacity is not able to serve the computational load coming from all the users, i.e. $\sum\limits_i^N L \cdot C_\text{serv,i} \geq C_\text{s,max}$, the threshold is increased stepwise, until the condition in \eqref{eq:opt.energy.sum.111} is satisfied. With this action, the user devices that save less energy by offloading, out of all the user devices, are not allowed to offload anymore. The corresponding constraints on $\nu$ are defined in the following theorem.
\begin{theorem}
Given the solution to the optimization problem in Theorem \ref{theorem:optimization.problem:10}, the threshold '$\nu$' is bounded by
\begin{eqnarray}
 \max\limits_{i: \alpha_i>0} \left( \left[\frac{E_\text{u,i} - K_i2^{r_i}}{C_\text{serv,i}}\right]\right)^+  
 \leq  \nu \leq 
 \min\limits_{i: \alpha_i>0}\left[\frac{E_\text{u,i} - K_i}{C_\text{serv,i}}\right]. 
\end{eqnarray}
\end{theorem}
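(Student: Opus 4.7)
The plan is to derive both sides of the bound by directly imposing the feasibility constraint $0 \leq \alpha_i \leq 1$ on the closed-form expression from Theorem~\ref{theorem:optimization.problem:10}. For any user device that is actually offloading, the positive-part operator is inactive, so $\alpha_i = \frac{1}{r_i}\log_2\!\left(\frac{1}{K_i}[E_\text{u,i} - \nu\, C_\text{serv,i}]\right)$, which is a strictly monotone (decreasing) function of $\nu$. The box constraint on $\alpha_i$ therefore translates into a two-sided inequality on $\nu$ for each offloading user, and taking the appropriate extremum over this set delivers the theorem.

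First I would establish the upper bound. The requirement $\alpha_i \geq 0$ is equivalent to demanding that the logarithm's argument be at least $1$, i.e.\ $E_\text{u,i} - \nu\, C_\text{serv,i} \geq K_i$. Solving for $\nu$ yields $\nu \leq (E_\text{u,i} - K_i)/C_\text{serv,i}$. Since $\nu$ is a single scalar that must simultaneously satisfy this inequality for every user with $\alpha_i > 0$, taking the minimum over $\{i : \alpha_i > 0\}$ gives the claimed upper bound.

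Next I would derive the lower bound by imposing $\alpha_i \leq 1$. Undoing the logarithm gives $E_\text{u,i} - \nu\, C_\text{serv,i} \leq K_i\, 2^{r_i}$, which rearranges to $\nu \geq (E_\text{u,i} - K_i\, 2^{r_i})/C_\text{serv,i}$. Because the right-hand side can be negative while $\nu$ must be non-negative by dual feasibility in the KKT conditions (it is the multiplier of an inequality constraint), the bound is tightened via a positive-part operator. Taking the maximum over $\{i : \alpha_i > 0\}$ then yields the lower bound as stated.

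The main obstacle is really a bookkeeping one rather than a technical one: one must be careful to restrict the min and max to the support $\{i:\alpha_i>0\}$, because for users whose offloading decision is truncated to zero by the $(\cdot)^+$ operator in Theorem~\ref{theorem:optimization.problem:10} the closed form no longer links $\alpha_i$ to $\nu$, so their local bounds are vacuous. Equally, one must recall that the $(\cdot)^+$ in the lower bound comes from dual feasibility ($\nu \geq 0$), not from the primal constraint itself. Apart from these two subtleties the argument reduces to elementary algebraic inversion of the expression in Theorem~\ref{theorem:optimization.problem:10}.
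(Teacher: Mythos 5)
Your proposal is correct and follows essentially the same route as the paper's own proof: invert the closed-form expression of Theorem~\ref{theorem:optimization.problem:10} under the box constraint $0 \leq \alpha_i \leq 1$, use dual feasibility $\nu \geq 0$ for the positive part, and take the min/max over the users with $\alpha_i > 0$. Your explicit remarks on why the bounds are restricted to offloading users and where the $(\cdot)^+$ comes from only make explicit what the paper leaves implicit.
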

Note that the upper and lower bounds on $\nu$  holds only for the user devices, with $\alpha_i \neq 0$.
\begin{proof}
See Appendix \ref{subsec:proof2}
\end{proof}

\section{Results and Discussion} \label{sec:results}
\subsection{Performance metrics}
\paragraph{Sum Energy}
The performance of SF and SL offloading is evaluated by comparing the total optimized energy i.\,e.  $E_\text{sum}(\mathcal{A}') = \sum_{\alpha_i\in\mathcal{A}} \: E_\text{sum,i}(\alpha_i)$, with the total energy consumed when no user device offloads the data processing, and when all the user devices completely offload the processing. The energy per user device $E_\text{sum,i}(\alpha_i)$ is given in (\ref{eq:system:model:energy:10}), where $\alpha_i$ is determined according to Theorem \ref{theorem:optimization.problem:10}.
The total energy consumption in the case that no user device offloads ($\forall i \in [1; \dots N]: \alpha_i = 0$) is given by 
\begin{equation}
E_\text{sum}(0) = \sum\limits_i^N \: E_\text{u,i}.
\end{equation}
Whereas, when every user device offloads all the data, the total energy consumption is given by
\begin{equation}
E_\text{sum}(1) = \sum\limits_i^N \:  E_\text{tr,i}(1).
\end{equation}

\paragraph{Offloading Percentage}
The offloading percentage is the ratio of total offloaded data processing for all user devices to the total data processing of the system, and is given by 
\begin{align}
\Lambda =  \frac{ \sum\limits_i^N \alpha_i \cdot D_i}{ \sum\limits_i^N D_i}.
\end{align}

\subsection{Performance depending on path-loss}
\begin{table}[t]
  \caption{Simulation Parameters}
  \label{tb:simulation:parameters}
  \centering
  \begin{tabular}{|c||c||c||c|}
    \hline 
    Variable & Value & Variable & Value \\ \hline
    \hline 
    $C_{s}$ & 200 MHz & $N$ & 50 Users \\
    \hline
    L & 10 & $M$ & 60 data elements \\
    \hline
    $BW$ & 10 MHz & S & 8 bits\\
    \hline
    $T_\text{pr}$ & 1 ms & $T$ & 20 ms\\
    \hline
        $d_0$ & 200 m & $R$ & 800 m \\
    \hline
    $\epsilon_i$ & $5e{-6}$ mJ & $\eta_i$ & 100 cycles \\
    \hline
     $f_i(M)$ & $M$  & $\eta_s$ & 1 cycle \\
    \hline
  \end{tabular}
\end{table}
%
Fig.~\ref{fig:results:offloading:1} shows the optimal offloading percentage $\Lambda$ for $N=50$ depending on different pathloss conditions. Two scenarios are assumed, with the availability of $\unit[100]{\%}$ and $\unit[10]{\%}$ of the cloud server capacity $C_s$. Where, $C_s = \unit[200]{MHz}$, to provide sufficient processing capacity for higher values of $M$ (discussed in the next subsection).

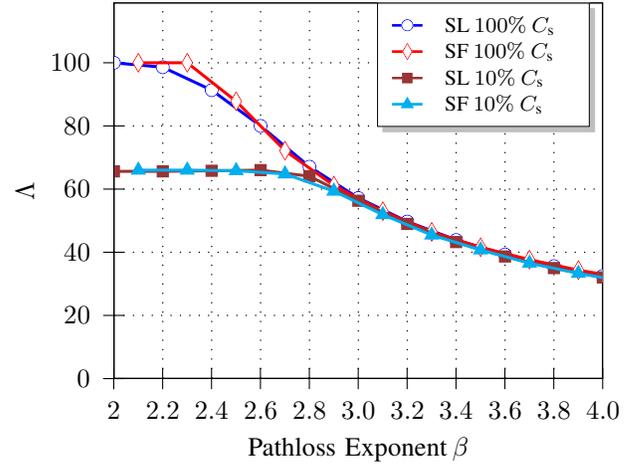
\begin{figure}[tb]
	\centering
    \begingroup
\unitlength=1mm
\psset{xunit=32.50000mm, yunit=0.42017mm, linewidth=0.1mm}
\psset{arrowsize=2pt 3, arrowlength=1.4, arrowinset=.4}\psset{axesstyle=frame}
\begin{pspicture}(1.53846, -38.08000)(4.00000, 119.00000)
\rput(-0.06154, -11.90000){%
\psaxes[subticks=0, labels=all, xsubticks=1, ysubticks=1, Ox=2, Oy=0, Dx=0.2, Dy=20]{-}(2.00000, 0.00000)(2.00000, 0.00000)(4.00000, 119.00000)%
\multips(2.20000, 0.00000)(0.20000, 0.0){9}{\psline[linecolor=black, linestyle=dotted, linewidth=0.2mm](0, 0)(0, 119.00000)}
\multips(2.00000, 20.00000)(0, 20.00000){5}{\psline[linecolor=black, linestyle=dotted, linewidth=0.2mm](0, 0)(2.00000, 0)}
\rput[b](3.00000, -26.18000){$\text{Pathloss Exponent} \: \beta $}
\rput[t]{90}(1.60000, 59.50000){$\Lambda$}
\psclip{\psframe(2.00000, 0.00000)(4.00000, 119.00000)}
\psline[linecolor=blue, plotstyle=curve, linewidth=0.4mm, showpoints=true, linestyle=solid, linecolor=blue, dotstyle=o, dotscale=1.2 1.2, linewidth=0.4mm](2.00000, 100.00000)(2.20000, 98.58797)(2.40000, 91.33355)(2.60000, 80.00576)(2.80000, 67.08496)(3.00000, 57.18920)(3.20000, 49.71945)(3.40000, 43.88725)(3.60000, 39.30643)(3.80000, 35.60574)(4.00000, 32.52836)
\psline[linecolor=red, plotstyle=curve, linewidth=0.4mm, showpoints=true, linestyle=solid, linecolor=red, dotstyle=diamond, dotscale=1.2 1.2, linewidth=0.4mm](2.10000, 100.00000)(2.30000, 100.00000)(2.50000, 87.80200)(2.70000, 72.16800)(2.90000, 61.11800)(3.10000, 52.99600)(3.30000, 46.46600)(3.50000, 41.57600)(3.70000, 37.67200)(3.90000, 34.33400)(4.10000, 31.71000)
\psline[linecolor=darkred, plotstyle=curve, linewidth=0.4mm, showpoints=true, linestyle=solid, linecolor=darkred, dotstyle=square*, dotscale=1.2 1.2, linewidth=0.4mm](2.00000, 65.61953)(2.20000, 65.62332)(2.40000, 65.79917)(2.60000, 66.00300)(2.80000, 64.07084)(3.00000, 56.30901)(3.20000, 48.97089)(3.40000, 43.23642)(3.60000, 38.66279)(3.80000, 34.99841)(4.00000, 32.03015)
\psline[linecolor=cyan, plotstyle=curve, linewidth=0.4mm, showpoints=true, linestyle=solid, linecolor=cyan, dotstyle=triangle*, dotscale=1.2 1.2, linewidth=0.4mm](2.10000, 66.00000)(2.30000, 66.00000)(2.50000, 65.82400)(2.70000, 64.79800)(2.90000, 59.34000)(3.10000, 51.90800)(3.30000, 45.44400)(3.50000, 40.73200)(3.70000, 36.54000)(3.90000, 33.32400)(4.10000, 30.60600)
\endpsclip
\psframe[linecolor=black, fillstyle=solid, fillcolor=white, shadowcolor=lightgray, shadowsize=1mm, shadow=true](3.07692, 79.73000)(3.94615, 119.00000)
\rput[l](3.35385, 111.86000){\footnotesize{$\text{SL} \: 100 \% \: C_\text{s}$}}
\psline[linecolor=blue, linestyle=solid, linewidth=0.3mm](3.13846, 111.86000)(3.26154, 111.86000)
\psline[linecolor=blue, linestyle=solid, linewidth=0.3mm](3.13846, 111.86000)(3.26154, 111.86000)
\psdots[linecolor=blue, linestyle=solid, linewidth=0.3mm, dotstyle=o, dotscale=1.2 1.2, linecolor=blue](3.20000, 111.86000)
\rput[l](3.35385, 103.53000){\footnotesize{$\text{SF} \: 100 \% \: C_\text{s}$}}
\psline[linecolor=red, linestyle=solid, linewidth=0.3mm](3.13846, 103.53000)(3.26154, 103.53000)
\psline[linecolor=red, linestyle=solid, linewidth=0.3mm](3.13846, 103.53000)(3.26154, 103.53000)
\psdots[linecolor=red, linestyle=solid, linewidth=0.3mm, dotstyle=diamond, dotscale=1.2 1.2, linecolor=red](3.20000, 103.53000)
\rput[l](3.35385, 95.20000){\footnotesize{$\text{SL}\: 10 \% \: C_\text{s}$}}
\psline[linecolor=darkred, linestyle=solid, linewidth=0.3mm](3.13846, 95.20000)(3.26154, 95.20000)
\psline[linecolor=darkred, linestyle=solid, linewidth=0.3mm](3.13846, 95.20000)(3.26154, 95.20000)
\psdots[linecolor=darkred, linestyle=solid, linewidth=0.3mm, dotstyle=square*, dotscale=1.2 1.2, linecolor=darkred](3.20000, 95.20000)
\rput[l](3.35385, 86.87000){\footnotesize{$\text{SF}\: 10 \% \: C_\text{s}$}}
\psline[linecolor=cyan, linestyle=solid, linewidth=0.3mm](3.13846, 86.87000)(3.26154, 86.87000)
\psline[linecolor=cyan, linestyle=solid, linewidth=0.3mm](3.13846, 86.87000)(3.26154, 86.87000)
\psdots[linecolor=cyan, linestyle=solid, linewidth=0.3mm, dotstyle=triangle*, dotscale=1.2 1.2, linecolor=cyan](3.20000, 86.87000)
}\end{pspicture}
\endgroup

    \caption{Offloading with pathloss variation}
    \label{fig:results:offloading:1}
\end{figure}
In Fig.~\ref{fig:results:offloading:1}, for the lower path-loss, $\beta\leq2.2$, $\unit[100]{\%}$ of the data processing is offloaded to the edge-cloud.
This illustrates that if all user devices have good channel conditions, they can minimize their energy-consumption by offloading to the edge cloud. As $\beta$ increases, the offloading percentage drops, as some user devices experience high channel attenuation. This results in an increase of the transmission energy required for offloading, as compared to the energy consumed for computation in the device itself.

In the second scenario with $\unit[10]{\%}C_s$, the edge cloud cannot simultaneously support offloading from all the users. Therefore, even though some users would prefer offloading, only a part of the data processing is carried out by the edge cloud. Hence, the maximum data processing supported by the edge cloud, does not exceed $\unit[65]{\%}$ of the total computation. The path-loss effects are prominent at $\beta>2.6$, and converges with $\unit[10]{\%}C_s$ scenario. This occurs due to a high number of user devices experiencing high channel attenuation, and hence, refrain from offloading. This illustrates that the edge cloud server capacity is not the limiting factor anymore.
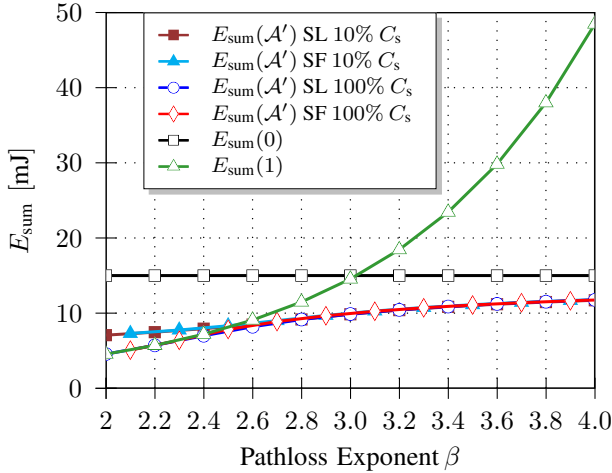
\begin{figure}[tb]
	\centering
    \begingroup
\unitlength=1mm
\psset{xunit=32.50000mm, yunit=1.00000mm, linewidth=0.1mm}
\psset{arrowsize=2pt 3, arrowlength=1.4, arrowinset=.4}\psset{axesstyle=frame}
\begin{pspicture}(1.53846, -16.00000)(4.00000, 50.00000)
\rput(-0.06154, -5.00000){%
\psaxes[subticks=0, labels=all, xsubticks=1, ysubticks=1, Ox=2, Oy=0, Dx=0.2, Dy=10]{-}(2.00000, 0.00000)(2.00000, 0.00000)(4.00000, 50.00000)%
\multips(2.20000, 0.00000)(0.20000, 0.0){9}{\psline[linecolor=black, linestyle=dotted, linewidth=0.2mm](0, 0)(0, 50.00000)}
\multips(2.00000, 10.00000)(0, 10.00000){4}{\psline[linecolor=black, linestyle=dotted, linewidth=0.2mm](0, 0)(2.00000, 0)}
\rput[b](3.00000, -11.00000){$\text{Pathloss Exponent} \: \beta $}
\rput[t]{90}(1.60000, 25.00000){$E_\text{sum} \: \left[ \text{mJ} \right]$}
\psclip{\psframe(2.00000, 0.00000)(4.00000, 50.00000)}
\psline[linecolor=darkred, plotstyle=curve, linewidth=0.4mm, showpoints=true, linestyle=solid, linecolor=darkred, dotstyle=square*, dotscale=1.2 1.2, linewidth=0.4mm](2.00000, 7.07256)(2.20000, 7.47052)(2.40000, 7.93380)(2.60000, 8.47231)(2.80000, 9.16365)(3.00000, 9.85116)(3.20000, 10.40255)(3.40000, 10.83736)(3.60000, 11.18787)(3.80000, 11.47529)(4.00000, 11.71394)
\psline[linecolor=cyan, plotstyle=curve, linewidth=0.4mm, showpoints=true, linestyle=solid, linecolor=cyan, dotstyle=triangle*, dotscale=1.2 1.2, linewidth=0.4mm](2.10000, 7.29736)(2.30000, 7.75566)(2.50000, 8.32108)(2.70000, 9.00335)(2.90000, 9.69977)(3.10000, 10.28788)(3.30000, 10.75305)(3.50000, 11.12612)(3.70000, 11.43335)(3.90000, 11.68500)(4.10000, 11.89923)
\psline[linecolor=blue, plotstyle=curve, linewidth=0.4mm, showpoints=true, linestyle=solid, linecolor=blue, dotstyle=o, dotscale=1.2 1.2, linewidth=0.4mm](2.00000, 4.55735)(2.20000, 5.71412)(2.40000, 6.98991)(2.60000, 8.18783)(2.80000, 9.15357)(3.00000, 9.87874)(3.20000, 10.43760)(3.40000, 10.87763)(3.60000, 11.23125)(3.80000, 11.52090)(4.00000, 11.76108)
\psline[linecolor=red, plotstyle=curve, linewidth=0.4mm, showpoints=true, linestyle=solid, linecolor=red, dotstyle=diamond, dotscale=1.2 1.2, linewidth=0.4mm](2.10000, 5.07830)(2.30000, 6.38792)(2.50000, 7.83054)(2.70000, 8.87976)(2.90000, 9.65380)(3.10000, 10.24800)(3.30000, 10.71289)(3.50000, 11.08217)(3.70000, 11.38623)(3.90000, 11.63911)(4.10000, 11.85083)
\psline[linecolor=black, plotstyle=curve, linewidth=0.4mm, showpoints=true, linestyle=solid, linecolor=black, dotstyle=square, dotscale=1.2 1.2, linewidth=0.4mm](2.00000, 15.00000)(2.20000, 15.00000)(2.40000, 15.00000)(2.60000, 15.00000)(2.80000, 15.00000)(3.00000, 15.00000)(3.20000, 15.00000)(3.40000, 15.00000)(3.60000, 15.00000)(3.80000, 15.00000)(4.00000, 15.00000)
\psline[linecolor=darkgreen, plotstyle=curve, linewidth=0.4mm, showpoints=true, linestyle=solid, linecolor=darkgreen, dotstyle=triangle, dotscale=1.2 1.2, linewidth=0.4mm](2.00000, 4.54698)(2.20000, 5.71345)(2.40000, 7.19553)(2.60000, 9.08090)(2.80000, 11.48205)(3.00000, 14.54343)(3.20000, 18.45066)(3.40000, 23.44238)(3.60000, 29.82561)(3.80000, 37.99555)(4.00000, 48.46119)
\endpsclip
\psframe[linecolor=black, fillstyle=solid, fillcolor=white, shadowcolor=lightgray, shadowsize=1mm, shadow=true](2.15385, 26.50000)(3.35231, 50.00000)
\rput[l](2.43077, 47.00000){\footnotesize{$ E_\text{sum}(\mathcal{A}')\: \text{SL}\: 10 \% \: C_\text{s}$}}
\psline[linecolor=darkred, linestyle=solid, linewidth=0.3mm](2.21538, 47.00000)(2.33846, 47.00000)
\psline[linecolor=darkred, linestyle=solid, linewidth=0.3mm](2.21538, 47.00000)(2.33846, 47.00000)
\psdots[linecolor=darkred, linestyle=solid, linewidth=0.3mm, dotstyle=square*, dotscale=1.2 1.2, linecolor=darkred](2.27692, 47.00000)
\rput[l](2.43077, 43.50000){\footnotesize{$ E_\text{sum}(\mathcal{A}')\:\text{SF} \: 10 \% \: C_\text{s}$}}
\psline[linecolor=cyan, linestyle=solid, linewidth=0.3mm](2.21538, 43.50000)(2.33846, 43.50000)
\psline[linecolor=cyan, linestyle=solid, linewidth=0.3mm](2.21538, 43.50000)(2.33846, 43.50000)
\psdots[linecolor=cyan, linestyle=solid, linewidth=0.3mm, dotstyle=triangle*, dotscale=1.2 1.2, linecolor=cyan](2.27692, 43.50000)
\rput[l](2.43077, 40.00000){\footnotesize{$ E_\text{sum}(\mathcal{A}')\:\text{SL} \: 100 \% \: C_\text{s}$}}
\psline[linecolor=blue, linestyle=solid, linewidth=0.3mm](2.21538, 40.00000)(2.33846, 40.00000)
\psline[linecolor=blue, linestyle=solid, linewidth=0.3mm](2.21538, 40.00000)(2.33846, 40.00000)
\psdots[linecolor=blue, linestyle=solid, linewidth=0.3mm, dotstyle=o, dotscale=1.2 1.2, linecolor=blue](2.27692, 40.00000)
\rput[l](2.43077, 36.50000){\footnotesize{$ E_\text{sum}(\mathcal{A}')\:\text{SF}\: 100 \% \: C_\text{s}$}}
\psline[linecolor=red, linestyle=solid, linewidth=0.3mm](2.21538, 36.50000)(2.33846, 36.50000)
\psline[linecolor=red, linestyle=solid, linewidth=0.3mm](2.21538, 36.50000)(2.33846, 36.50000)
\psdots[linecolor=red, linestyle=solid, linewidth=0.3mm, dotstyle=diamond, dotscale=1.2 1.2, linecolor=red](2.27692, 36.50000)
\rput[l](2.43077, 33.00000){\footnotesize{$ E_\text{sum}(0)$}}
\psline[linecolor=black, linestyle=solid, linewidth=0.3mm](2.21538, 33.00000)(2.33846, 33.00000)
\psline[linecolor=black, linestyle=solid, linewidth=0.3mm](2.21538, 33.00000)(2.33846, 33.00000)
\psdots[linecolor=black, linestyle=solid, linewidth=0.3mm, dotstyle=square, dotscale=1.2 1.2, linecolor=black](2.27692, 33.00000)
\rput[l](2.43077, 29.50000){\footnotesize{$E_\text{sum}(1)$}}
\psline[linecolor=darkgreen, linestyle=solid, linewidth=0.3mm](2.21538, 29.50000)(2.33846, 29.50000)
\psline[linecolor=darkgreen, linestyle=solid, linewidth=0.3mm](2.21538, 29.50000)(2.33846, 29.50000)
\psdots[linecolor=darkgreen, linestyle=solid, linewidth=0.3mm, dotstyle=triangle, dotscale=1.2 1.2, linecolor=darkgreen](2.27692, 29.50000)
}\end{pspicture}
\endgroup
    \caption{Sum energy consumption with pathloss variation}
    \label{fig:results:offloading:2}
\end{figure}

In Fig.~\ref{fig:results:offloading:2}, the total energy consumption is shown, again for both cases of $\unit[100]{\%}C_s$ and $\unit[10]{\%}C_s$, as well as for state-full and state-less offloading. The amount of total processing data $D_i$ is constant for all user devices.
The energy consumption due to in-device data processing is independent of the channel condition, hence $E_\text{sum}(0)$ is constant over $\beta$. 
In the case of full offloading, the energy consumption $E_\text{sum}(1)$ increases exponentially with increasing $\beta$. 

Consider the scenario of $\unit[100]{\%}C_s$. For low $\beta\leq2.6$, the energy consumption of $E_\text{sum}(\mathcal{A}')$ and $E_\text{sum}(1)$ are identical, because, offloading all data processing is optimal for all the user devices.
However, as $\beta$ increases, $E_\text{sum}(1)$ increases exponentially, while $E_\text{sum}(\mathcal{A}')$ does not, as only a fraction of user devices offloads. However, at all $\beta$, $E_\text{sum}(\mathcal{A}')<E_\text{sum}(0)$, which implies that strategically offloading the data processing from the user devices, can save energy. For very large $\beta$, $E_\text{sum}(\mathcal{A}')\to E_\text{sum}(0)$, because offloading data processing would become too expensive in terms of energy consumption. 

In the second scenario $\unit[10]{\%}C_s$, a similar behavior is observed, apart from $E_\text{sum}(\mathcal{A}')>E_\text{sum}(1)$ for low values of $\beta$. The reason for this behavior was already shown in Fig.~\ref{fig:results:offloading:1}, i.\,e., only a fraction of user devices can offload data processing due to limited server processing capabilities.
As the path-loss further increases, the $C_s$ is not the limiting constraint, and hence $E_\text{sum}(\mathcal{A}')$ for both the scenario converges.

Finally, as shown in Fig.~\ref{fig:results:offloading:2}, no visible differences between SF and SL offloading are observed. This implies that, at lower computational complexity, it is beneficial to either offload all data or nothing. This trend slightly changes as the amount of data elements is increased, which is discussed in the next part.

\subsection{Performance depending on data volume}
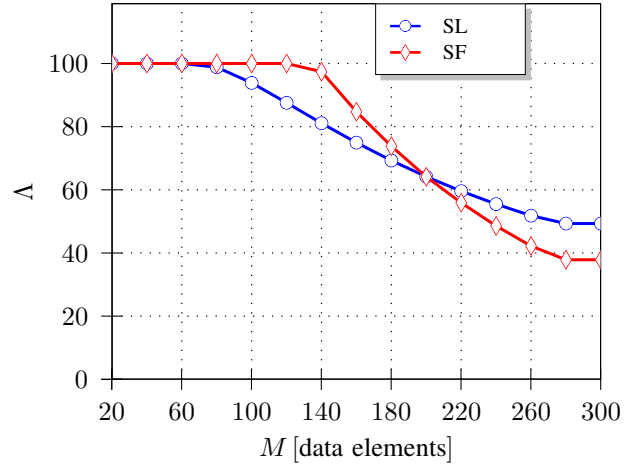
\begin{figure}[tb] 
	\centering
    \begingroup
\unitlength=1mm
\psset{xunit=0.23214mm, yunit=0.42017mm, linewidth=0.1mm}
\psset{arrowsize=2pt 3, arrowlength=1.4, arrowinset=.4}\psset{axesstyle=frame}
\begin{pspicture}(-44.61538, -38.08000)(300.00000, 119.00000)
\rput(-8.61538, -11.90000){%
\psaxes[subticks=0, labels=all, xsubticks=1, ysubticks=1, Ox=20, Oy=0, Dx=40, Dy=20]{-}(20.00000, 0.00000)(20.00000, 0.00000)(300.00000, 119.00000)%
\multips(60.00000, 0.00000)(40.00000, 0.0){6}{\psline[linecolor=black, linestyle=dotted, linewidth=0.2mm](0, 0)(0, 119.00000)}
\multips(20.00000, 20.00000)(0, 20.00000){5}{\psline[linecolor=black, linestyle=dotted, linewidth=0.2mm](0, 0)(280.00000, 0)}
\rput[b](160.00000, -26.18000){$ M \left[ \text{data elements}\right]$}
\rput[t]{90}(-36.00000, 59.50000){$\Lambda $}
\psclip{\psframe(20.00000, 0.00000)(300.00000, 119.00000)}
\psline[linecolor=blue, plotstyle=curve, linewidth=0.4mm, showpoints=true, linestyle=solid, linecolor=blue, dotstyle=o, dotscale=1.2 1.2, linewidth=0.4mm](20.00000, 100.00000)(40.00000, 100.00000)(60.00000, 100.00000)(80.00000, 98.82474)(100.00000, 93.85939)(120.00000, 87.55877)(140.00000, 81.09433)(160.00000, 74.93644)(180.00000, 69.27707)(200.00000, 64.16280)(220.00000, 59.57856)(240.00000, 55.48111)(260.00000, 51.81533)(280.00000, 49.32230)(300.00000, 49.32230)
\psline[linecolor=red, plotstyle=curve, linewidth=0.4mm, showpoints=true, linestyle=solid, linecolor=red, dotstyle=diamond, dotscale=1.2 1.2, linewidth=0.4mm](20.00000, 100.00000)(40.00000, 100.00000)(60.00000, 100.00000)(80.00000, 100.00000)(100.00000, 100.00000)(120.00000, 100.00000)(140.00000, 97.41000)(160.00000, 84.70800)(180.00000, 73.81600)(200.00000, 64.12600)(220.00000, 55.89600)(240.00000, 48.56800)(260.00000, 42.20800)(280.00000, 37.85200)(300.00000, 37.85200)
\endpsclip
\psframe[linecolor=black, fillstyle=solid, fillcolor=white, shadowcolor=lightgray, shadowsize=1mm, shadow=true](170.76923, 96.39000)(256.92308, 119.00000)
\rput[l](209.53846, 111.86000){\footnotesize{$\text{SL}$}}
\psline[linecolor=blue, linestyle=solid, linewidth=0.3mm](179.38462, 111.86000)(196.61538, 111.86000)
\psline[linecolor=blue, linestyle=solid, linewidth=0.3mm](179.38462, 111.86000)(196.61538, 111.86000)
\psdots[linecolor=blue, linestyle=solid, linewidth=0.3mm, dotstyle=o, dotscale=1.2 1.2, linecolor=blue](188.00000, 111.86000)
\rput[l](209.53846, 103.53000){\footnotesize{$\text{SF}$}}
\psline[linecolor=red, linestyle=solid, linewidth=0.3mm](179.38462, 103.53000)(196.61538, 103.53000)
\psline[linecolor=red, linestyle=solid, linewidth=0.3mm](179.38462, 103.53000)(196.61538, 103.53000)
\psdots[linecolor=red, linestyle=solid, linewidth=0.3mm, dotstyle=diamond, dotscale=1.2 1.2, linecolor=red](188.00000, 103.53000)
}\end{pspicture}
\endgroup
    \caption{Offloading with increasing computation}
    \label{fig:results:offloading:3}
\end{figure}
Fig.~\ref{fig:results:offloading:3} shows the offloading percentage, as a function of the number of data elements $M$, for both SF and SL offloading respectively.
The offloading percentage decreases with increasing $M$, as the time period $T$ stays constant, and therefore, the required spectral efficiency increases. This results in an increase of required transmit power. Hence, some user devices do not offload, as the transmit energy consumption $E_\text{tr,i}$ exceeds the in-device energy consumption $E_\text{u,i}$.

Furthermore, we can observe a clear difference between SL and SF, only at higher values of $M$. At lower $M$, $D_i$ and $C_\text{serv,i}$ is small, i.\,e. less communication and computational load is introduced. Therefore, user devices can completely offload the data in case of good channel conditions, or do not offload at all, if the channel attenuation is high. This causes the SL offloading to perform similar to the SF offloading.
In the case of SL offloading, it is possible to offload only parts of the data per user as $M$ increases. However, for SF offloading, all the data per user is either offloaded, or nothing is offloaded. Therefore, SF offloading experiences a steeper slope at higher $M$.
\begin{figure}[tb] 
	\centering
    \begingroup
\unitlength=1mm
\psset{xunit=0.23214mm, yunit=0.31250mm, linewidth=0.1mm}
\psset{arrowsize=2pt 3, arrowlength=1.4, arrowinset=.4}\psset{axesstyle=frame}
\begin{pspicture}(-44.61538, -51.20000)(300.00000, 160.00000)
\rput(-8.61538, -16.00000){%
\psaxes[subticks=0, labels=all, xsubticks=1, ysubticks=1, Ox=20, Oy=0, Dx=40, Dy=20]{-}(20.00000, 0.00000)(20.00000, 0.00000)(300.00000, 160.00000)%
\multips(60.00000, 0.00000)(40.00000, 0.0){6}{\psline[linecolor=black, linestyle=dotted, linewidth=0.2mm](0, 0)(0, 160.00000)}
\multips(20.00000, 20.00000)(0, 20.00000){7}{\psline[linecolor=black, linestyle=dotted, linewidth=0.2mm](0, 0)(280.00000, 0)}
\rput[b](160.00000, -35.20000){$M \left[ \text{data elements}\right]$}
\rput[t]{90}(-36.00000, 80.00000){$ E_\text{sum} \left[\text{mJ}\right]$}
\psclip{\psframe(20.00000, 0.00000)(300.00000, 160.00000)}
\psline[linecolor=blue, plotstyle=curve, linewidth=0.4mm, showpoints=true, linestyle=solid, linecolor=blue, dotstyle=o, dotscale=1.2 1.2, linewidth=0.4mm](20.00000, 1.12473)(40.00000, 2.60882)(60.00000, 4.56708)(80.00000, 7.13729)(100.00000, 10.29657)(120.00000, 13.90650)(140.00000, 17.85839)(160.00000, 22.06744)(180.00000, 26.46965)(200.00000, 31.01730)(220.00000, 35.67496)(240.00000, 40.41619)(260.00000, 45.22043)(280.00000, 49.48871)(300.00000, 52.02260)
\psline[linecolor=red, plotstyle=curve, linewidth=0.4mm, showpoints=true, linestyle=solid, linecolor=red, dotstyle=diamond, dotscale=1.2 1.2, linewidth=0.4mm](20.00000, 1.12473)(40.00000, 2.60882)(60.00000, 4.56708)(80.00000, 7.15103)(100.00000, 10.56057)(120.00000, 15.05948)(140.00000, 20.83186)(160.00000, 26.82347)(180.00000, 33.17149)(200.00000, 39.73277)(220.00000, 46.50419)(240.00000, 53.34676)(260.00000, 60.23785)(280.00000, 66.11309)(300.00000, 69.22049)
\psline[linecolor=black, plotstyle=curve, linewidth=0.4mm, showpoints=true, linestyle=solid, linecolor=black, dotstyle=square, dotscale=1.2 1.2, linewidth=0.4mm](20.00000, 5.00000)(40.00000, 10.00000)(60.00000, 15.00000)(80.00000, 20.00000)(100.00000, 25.00000)(120.00000, 30.00000)(140.00000, 35.00000)(160.00000, 40.00000)(180.00000, 45.00000)(200.00000, 50.00000)(220.00000, 55.00000)(240.00000, 60.00000)(260.00000, 65.00000)(280.00000, 70.00000)(300.00000, 75.00000)
\psline[linecolor=darkgreen, plotstyle=curve, linewidth=0.4mm, showpoints=true, linestyle=solid, linecolor=darkgreen, dotstyle=triangle, dotscale=1.2 1.2, linewidth=0.4mm](20.00000, 1.12473)(40.00000, 2.60882)(60.00000, 4.56708)(80.00000, 7.15103)(100.00000, 10.56057)(120.00000, 15.05948)(140.00000, 20.99584)(160.00000, 28.82890)(180.00000, 39.16469)(200.00000, 52.80285)(220.00000, 70.79850)(240.00000, 94.54392)(260.00000, 125.87617)(280.00000, 155.78542)(300.00000, 155.78542)
\endpsclip
\psframe[linecolor=black, fillstyle=solid, fillcolor=white, shadowcolor=lightgray, shadowsize=1mm, shadow=true](41.53846, 107.20000)(180.92308, 160.00000)
\rput[l](80.30769, 150.40000){\footnotesize{$E_\text{sum}(\mathcal{A}') \: \text{SL}$}}
\psline[linecolor=blue, linestyle=solid, linewidth=0.3mm](50.15385, 150.40000)(67.38462, 150.40000)
\psline[linecolor=blue, linestyle=solid, linewidth=0.3mm](50.15385, 150.40000)(67.38462, 150.40000)
\psdots[linecolor=blue, linestyle=solid, linewidth=0.3mm, dotstyle=o, dotscale=1.2 1.2, linecolor=blue](58.76923, 150.40000)
\rput[l](80.30769, 139.20000){\footnotesize{$E_\text{sum}(\mathcal{A}') \: \text{SF}$}}
\psline[linecolor=red, linestyle=solid, linewidth=0.3mm](50.15385, 139.20000)(67.38462, 139.20000)
\psline[linecolor=red, linestyle=solid, linewidth=0.3mm](50.15385, 139.20000)(67.38462, 139.20000)
\psdots[linecolor=red, linestyle=solid, linewidth=0.3mm, dotstyle=diamond, dotscale=1.2 1.2, linecolor=red](58.76923, 139.20000)
\rput[l](80.30769, 128.00000){\footnotesize{$ E_\text{sum}(0)$}}
\psline[linecolor=black, linestyle=solid, linewidth=0.3mm](50.15385, 128.00000)(67.38462, 128.00000)
\psline[linecolor=black, linestyle=solid, linewidth=0.3mm](50.15385, 128.00000)(67.38462, 128.00000)
\psdots[linecolor=black, linestyle=solid, linewidth=0.3mm, dotstyle=square, dotscale=1.2 1.2, linecolor=black](58.76923, 128.00000)
\rput[l](80.30769, 116.80000){\footnotesize{$ E_\text{sum}(1)$}}
\psline[linecolor=darkgreen, linestyle=solid, linewidth=0.3mm](50.15385, 116.80000)(67.38462, 116.80000)
\psline[linecolor=darkgreen, linestyle=solid, linewidth=0.3mm](50.15385, 116.80000)(67.38462, 116.80000)
\psdots[linecolor=darkgreen, linestyle=solid, linewidth=0.3mm, dotstyle=triangle, dotscale=1.2 1.2, linecolor=darkgreen](58.76923, 116.80000)
}\end{pspicture}
\endgroup
    \caption{Sum energy with increasing computation}
    \label{fig:results:offloading:4}
\end{figure}
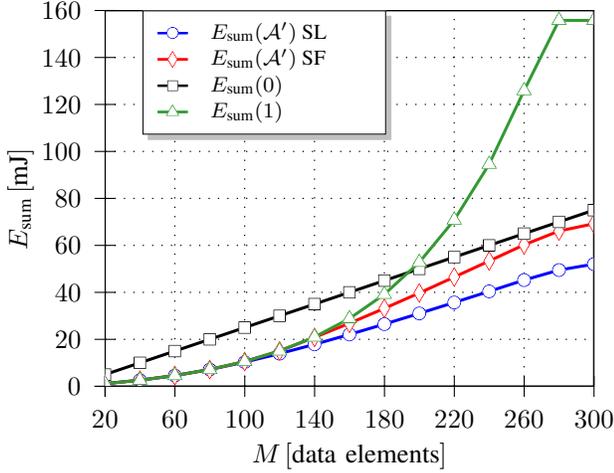
This is also reflected by the energy consumption, as shown in Fig.~ \ref{fig:results:offloading:4}. The slope with which the energy consumption increases for SF offloading, is slightly higher than in the case of SL offloading. However, both SF and SL offloading, have a lower energy consumption compared to a fully centralized processing, $E_\text{sum}(1)$, and a fully localized processing, $E_\text{sum}(0)$.
$E_\text{sum}(0)$ increases linearly because the computational complexity in our scenario scales linearly with the number of data elements $M$. In contrast, the energy consumption for fully centralized processing increases exponentially.
\section{Conclusion} \label{sec: conclusion}
In this paper, we developed an energy consumption model, for an in-device computation and offloading the computation. A closed form solution is obtained to optimally offload the computation, for the given cloud computational resources and channel condition. The results show that the energy consumption of the user devices can be reduced by making an informed decision, and analyzing the trade-off between the communication and computational load of the system. Furthermore, the results illustrate that the bandwidth, and the cloud server capacity, are the limiting factors to optimally offload the computation. If the processing capacity of the cloud server is limited, even with very good channel conditions, the user cannot offload to the cloud, hence, sub-optimally saving the energy. Similarly, if the system has to process a large amount of data, in a short time span, then the available bandwidth is the limiting factor. 
This paper only deals with the data processing algorithms that have linear complexity. The multi-user analytical framework can be further use to study algorithms with different complexities.
\vfill 
\appendix
\subsection{Proof of Theorem 1} \label{subsec:proof1}
\begin{proof} 
In order to solve the optimization problem, we need to first apply the derivative to $\mathcal{L}$ in (\ref{eq:opt.sum.energy.10}) w.r.t. $\alpha_i$:
       \begin{equation}
    	\frac{\partial \mathcal{L}}{\partial \alpha_i} = -E_\text{u,i} + \frac{\partial E_\text{tr,i}(\alpha_i)}{\partial \alpha_i}
        	+ \nu L \cdot C_\text{serv,i}
            - \Psi_i \label{eq:opt.energy.sum.100}
    \end{equation}
If we now define the spectral efficiency $r_i = D_i/(B_i T)$ and the constant $K_i = \text{ln}(2) \left [\frac{d_i}{d_0}\right]^{\beta} N_0 D_i/G$, then
    \begin{eqnarray}
    	\frac{\partial E_\text{tr,i}(\alpha_i)}{\partial \alpha_i} & = & K_i2^{\alpha_i r_i}.
    \end{eqnarray}
    Hence, (\ref{eq:opt.energy.sum.100}) becomes
    \begin{equation} \label{eq:opt.energy.sum.120}
    	\frac{\partial \mathcal{L}}{\partial \alpha_i} = -E_\text{u,i} + K_i2^{\alpha_i r_i}
        	+ \nu L \cdot C_\text{serv,i}
            - \Psi_i,
    \end{equation}
    to which we need to apply the KKT conditions, i.\,e.,
\begin{eqnarray}
    	\forall \alpha_i:  \frac{\partial \mathcal{L}}{\partial \alpha_i} = 0;
        \alpha_i \geq 0  \label{eq:opt.energy.sum.110} \\
        \sum\limits_i^N L \cdot \alpha_i \cdot C_\text{serv,i}   \leq C_\text{s, max} \label{eq:opt.energy.sum.112} \\
        \nu \: \left( \sum\limits_i^N L \cdot \alpha_i \cdot C_\text{serv,i}  - C_\text{s, max} \right) = 0 &  \label{eq:opt.energy.sum.111}\\     
        \nu  \geq 0; \Psi_i \geq 0;  \Psi_i\alpha_i  = 0 
        \end{eqnarray}

In the following, we will consider four cases under which the above KKT conditions need to be considered.
\paragraph{Fully Loaded system with offloading, $\nu$ > 0, $\psi$ = 0}

In this case, we need to consider (\ref{eq:opt.energy.sum.120}) and (\ref{eq:opt.energy.sum.111}). If $\nu>0$, then
(\ref{eq:opt.energy.sum.111}) implies a fully loaded system where all resources at the edge cloud server are in use.
Now, let's focus first on (\ref{eq:opt.energy.sum.120}).
\begin{eqnarray}
 -E_\text{u,i} + K_i2^{\alpha_i r_i} +  \nu \: C_\text{serv,i} - \underbrace{\Psi_i}_{=0}  =  0 \\
  K_i2^{\alpha_i r_i}  =  E_\text{u,i} - \nu \: C_\text{serv,i} \\
 \alpha_i  =  \frac{1}{r_i}\log_2\left(\frac{1}{K_i}\left[E_\text{u,i} - \nu \: C_\text{serv,i} \right]\right) \label{eq:opt.energy.sum.200}
\end{eqnarray}
In addition, from (\ref{eq:opt.energy.sum.110}), we know that $\alpha_i \geq 0$, i\,e.,
\begin{equation} \label{eq:energy.opt.sum.210}
  \alpha_i = \left(\frac{1}{r_i}\log_2\left(\frac{1}{K_i}\left[E_\text{u,i} - \nu \: C_\text{serv,i} \right]\right)\right)^+ 
\end{equation}

\paragraph{Case 2: Underloaded System with offloading, $\nu$ = 0, $\psi$ = 0}
If $\nu$ = 0, from \eqref{eq:opt.energy.sum.111},  
\begin{equation} 
\sum\limits_i^N L \: \alpha_i \: C_\text{serv,i} - C_\text{s,max} < 0 ,
\end{equation} 
and $\psi = 0$, i.e., $\alpha_i \geq 0$. By putting $\psi$ = 0 and $\nu$ = 0 in $\eqref{eq:opt.energy.sum.120}$ we get,
\begin{eqnarray}
-E_\text{u,i} + K_i 2^{\alpha_i r_i} = 0 \\
\alpha_i = \left( \frac{1}{r_i} \cdot \log_2\left[ \frac{E_\text{u,i}}{K_i}\right] \right)^+ 
\end{eqnarray}

\paragraph{Case 3: No Offloading, $\nu$ = 0 and $\psi$ > 0}
If $\psi$ > 0, then $\alpha_i$ = 0. Using 
$\nu = 0$ and $\alpha_i = 0$ in \eqref{eq:opt.energy.sum.120} we get
\begin{eqnarray}
\psi =  K_i - E_\text{u,i}
\end{eqnarray}
And applying $\alpha_i=0$ to (\ref{eq:opt.energy.sum.112}) implies
\begin{eqnarray}
\sum\limits_i^N \alpha_i \: L\:  C_\text{serv,i} - C_\text{s,max} < 0 \\
0  < C_\text{s,max}.
\end{eqnarray}
The condition only holds as long as $\psi > 0$, i.\,e., $K_i > E_\text{u,i}$.

\paragraph{Case 4: No Offloading condition, $\nu$ > 0 and $\psi$ > 0}

If $\nu$ > 0 $\Rightarrow$ $\sum\limits_i^N \alpha_i \: C_\text{serv,i} - C_\text{s,max} = 0$. If $\psi$ > 0 $\Rightarrow$ $\alpha_i$ = 0. If $\alpha_i = 0$ in the above constraint, then this implies $0 = C_\text{s,max}$, which cannot be true.
\end{proof}

\subsection{Proof of Theorem 2} \label{subsec:proof2}
\begin{proof}
From ~\eqref{eq:opt.energy.sum.200}, $\alpha_i \geq 0$,
\begin{eqnarray} \label{eq:energy.opt.nu.ubound}
  \frac{1}{K_i}\left[E_\text{u,i} - \nu \: C_\text{serv,i} \right]  \geq  1 \\
  \frac{E_\text{u,i} - K_i}{C_\text{serv,i}}   \geq  \nu 
\end{eqnarray}
and $\alpha_i \leq 1$, according to the optimization problem in eq.~\eqref{eq:opt.sum.energy.1} and ~\eqref{eq:energy.opt.sum.210}
\begin{eqnarray} \label{eq:energy.opt.nu.lbound}
  \frac{1}{r_i}\log_2\left(\frac{1}{K_i}\left[E_\text{u,i} - \nu \: C_\text{serv,i} \right]\right)  \leq  1 \\
  \nu  \geq  \frac{E_\text{u,i} - K_i2^{r_i}}{C_\text{serv,i}}
\end{eqnarray}
From both the ~\eqref{eq:energy.opt.nu.ubound}, ~\eqref{eq:energy.opt.nu.lbound} and the fact that $\nu \geq 0$
\begin{equation}
\frac{E_\text{u,i} - K_i}{C_\text{serv,i}} \geq \nu \geq  \left[\frac{E_\text{u,i} - K_i2^{r_i}}{C_\text{serv,i}}\right]^+
\end{equation}
In a multi-user system, the bounds become
\begin{equation}
\min\limits_{\forall i: \alpha_i>0}\left[\frac{E_\text{u,i} - K_i}{C_\text{serv,i}}\right] \geq \nu \geq \max\limits_{\forall i: \alpha_i>0} \left( \left[\frac{E_\text{u,i} - K_i2^{r_i}}{C_\text{serv,i}}\right]\right)^+
\end{equation}
Note that the bound only considers user devices, which offload data processing.
\end{proof}

\balance
\bibliographystyle{IEEEtran}
\bibliography{Globecom_2017}
\end{document}